\newtheorem{thm}{Theorem}
\def\rank{\mathrm{rank}}
\DeclareMathAlphabet{\bit}{OML}{cmm}{b}{it}
\def\<{\leqslant}           
\def\>{\geqslant}           
\def\d{\partial}
\def\wh{\widehat}
\def\wt{\widetilde}
\def\Re{\mathrm{Re} }   
\def\Im{\mathrm{Im} }   
\def\mR{{\mathbb R}}
\def\mC{\mathbb{C}}
\def\Tr{\mathrm{Tr}}
\def\rT{\mathrm{T}}
\def\bS{\mathbf{S}}
\def\bE{\mathbf{E}}
\def\[[[{[\![\![}
\def\]]]{]\!]\!]}
\def\bra{\langle}
\def\ket{\rangle}
\def\re{\mathrm{e}}
\def\rd{\mathrm{d}}
\def\bJ{\mathbf{J}}
\def\x{\times}
\def\ox{\otimes}
\def\fF{\mathfrak{F}}
\def\fH{\mathfrak{H}}
\def\fS{\mathfrak{S}}
\def\mS{{\mathbb S}}
\def\eps{\epsilon}
\def\ups{\upsilon}
\def\Ups{\Upsilon}
\def\diag{\mathop\mathrm{diag}}
\def\od{\odot}
\title{\LARGE \bf
Decoherence Time in Quantum Harmonic Oscillators as Quantum Memory Systems*}
\author{Igor G. Vladimirov$^{1}$, \qquad Ian R. Petersen$^{2}$
\thanks{*This work is supported by the Australian Research Council  grants  DP210101938, DP200102945.}
\thanks{$^{1,2}$School of Engineering, Australian National University, Australia, Acton, ACT 2601, Canberra, Australia
        {\tt\small igor.g.vladimirov@gmail.com, i.r.petersen@gmail.com}.}
}
\begin{document}
\maketitle
\thispagestyle{empty}
\pagestyle{empty}

\begin{abstract}
This paper is concerned with open quantum harmonic oscillators (OQHOs)  described by linear quantum stochastic differential equations. This framework includes isolated oscillators with zero Hamiltonian, whose system variables  remain unchanged (in the Heisenberg picture of quantum dynamics)  over the course of time, making such systems potentially applicable as quantum memory devices. In a more realistic case of system-environment coupling, we  define a  memory decoherence horizon  as a typical time for a mean-square deviation of the system variables from their initial values to become relatively  significant as specified by a weighting matrix and a fidelity parameter.  We consider the maximization of the decoherence time over the energy and coupling matrices of the OQHO as a memory system in its storage phase and obtain a condition under which the zero Hamiltonian delivers a suboptimal solution. This optimization problem is also discussed for an interconnection of OQHOs.
\end{abstract}

\section{INTRODUCTION}

As a modelling paradigm for the physical world at atomic and subatomic scales, quantum mechanics \cite{LL_1981} has  essential distinctions from classical mechanics and classical field theory. They include the noncommutative operator-valued nature of quantum dynamic variables and quantum states (both represented by operators on an underlying Hilbert space) and more complicated probabilistic and information theoretic aspects of  quantum system dynamics subject to measurement \cite{H_2001}.  These distinctions and their consequences are exploited as quantum mechanical resources in quantum technologies, such as quantum communication and quantum information processing  \cite{NC_2000},  with the aim of engineering quantum systems to outperform their classical analogues.
The performance criteria used for this purpose depend on specific control objectives  pertaining to qualitative aspects and optimization of the system behaviour (for example, stability, robustness to unmodelled dynamics,  and minimization of cost functionals).

In particular, applications to information storage employ the ability of a system to retain certain states or dynamic variables over the course of time in combination with the possibility to ``write'' and ``read'' them (see \cite{FCHJ_2016,YJ_2014} and references therein).
For eliminating the external interference in the storage stage,  the system is isolated from the surroundings, and its internal dynamics are described in terms of a Hamiltonian, which is usually a function (for example, a polynomial) of the system variables.
As a result, any operator which commutes with the Hamiltonian (including the Hamiltonian itself) is preserved over the course of time in this case (we consider the Heisenberg picture of quantum dynamics).
Therefore, if the Hamiltonian of the isolated system is zero, then all the system variables are preserved in time, thus providing a set of integrals of motion in the form of (in general noncommuting) quantum variables.
However, such integrals of motion usually hold only approximately (and can be regarded as conserved quantities over a finite time horizon) because of model inaccuracies and the coupling of the system to its environment, including classical or quantum systems and external fields.

In the framework of the Hudson-Parthasarathy calculus \cite{HP_1984,P_1992},
the internal system dynamics and the system-field interaction (accompanied by the energy exchange between them) are governed by quantum stochastic differential equations (QSDEs) specified by the Hamiltonian and coupling operators (which are also functions of the system variables). In the presence of quantum noise, the system variables drift away from their initial conditions which would be conserved operators in the ideal case of isolated zero-Hamiltonian system dynamics. A typical time it takes for this deviation to become significant  in comparison with the initial values suggests a performance index for the quantum system as a memory device in its storage phase, while there also are other approaches to decoherence (see \cite{VP_2023_SCL} and references  therein).

The present paper discusses a particular way of quantifying the memory decoherence time for OQHOs,  whose dynamic variables are organized as conjugate pairs of positions and momenta 
\cite{S_1994} satisfying canonical commutation relations (CCRs)  and evolving in time according to linear QSDEs. Such systems provide an important class of models in quantum optics \cite{WM_2008} (for example, for modelling optical cavities) and, more generally,  are used as principal building  blocks in linear quantum systems theory \cite{NY_2017,P_2017,ZD_2022} which develops control and filtering approaches for them.
This class of systems (and their finite-level counterparts governed by quasilinear QSDEs \cite{EMPUJ_2016,VP_2022_SIAM}) lend themselves to computation of second and higher-order (for example, quadratic-exponential \cite{VPJ_2018a,V_2023_IFAC}) moments of the system variables, which is particularly efficient  in the case of external fields in the vacuum state \cite{P_1992}. We take advantage of this tractability and use a mean-square deviation of the system variables from their initial values, along with a weighting matrix and a fidelity parameter,  in order to define the decoherence time relevant to quantum memory applications. This leads to a problem of maximizing the decoherence time as a performance index over the energy and coupling matrices of the OQHO or its approximate version based on a second-order Taylor expansion of the decoherence time over the fidelity parameter.  We establish a condition under which the zero Hamiltonian provides a suboptimal solution and discuss this optimization problem for a coherent feedback interconnection of two OQHOs with direct energy and field-mediated coupling \cite{ZJ_2011a}.


\section{ISOLATED QUANTUM HARMONIC OSCILLATOR}
\label{sec:iso}

In the Heisenberg picture of quantum mechanics \cite{S_1994}, physical quantities are represented by time-varying operators $\zeta(t)$ on a Hilbert space $\fH$,  and a  fixed positive semi-definite self-adjoint   density operator (or quantum state)  $\rho = \rho^\dagger \succcurlyeq 0$ of unit trace $\Tr \rho = 1$ on $\fH$ specifies the quantum probabilistic structure including the expectation $\bE \zeta := \Tr (\rho \zeta)$. In this picture, a quantum harmonic oscillator is an autonomous dynamical system with an even number
\begin{equation}
\label{nnu}
  n:= 2\nu
\end{equation}
of self-adjoint operator-valued variables $X_1(t), \ldots, X_n(t)$ on the space $\fH$, which correspond to $\nu$ conjugate position-momentum pairs. Their evolution in time $t\> 0$ is governed by an ODE
\begin{equation}
\label{Xdot}
  \dot{X} =
  i[H,X]
  =
  A_0X
\end{equation}
for the column-vector $X:=(X_k)_{1\< k \< n}$,
where $\dot{(\ )}:= \rd/\rd t$ is the time derivative, and the matrix $A_0 \in \mR^{n\x n}$ is described below.  The superoperator $i[H, \cdot]$ uses the commutator $[\xi, \eta]:= \xi\eta - \eta\xi$ of linear operators (which applies entry-wise if $\eta$ is a vector of operators as in (\ref{Xdot})) and is a quantum mechanical counterpart of the Poisson bracket \cite{A_1989} for classical Hamiltonian systems (such as a multidimensional mass-spring system with $\nu$ degrees of freedom).  This superoperator is specified by a Hamiltonian
\begin{equation}
\label{H}
  H:= \tfrac{1}{2}X^\rT R X,
\end{equation}
which is a self-adjoint operator on $\fH$ representing the internal energy of the oscillator and depending on the system variables in a quadratic fashion
parameterized by an energy matrix $R = R^\rT \in \mR^{n\x n}$. Similarly to the Poisson brackets of the classical positions and momenta,  the quantum system variables, considered at the same but otherwise arbitrary moment of time,   satisfy the CCRs
\begin{equation}
\label{XCCR}
    [X,X^\rT]
    :=
    ([X_j,X_k])_{1\< j,k\< n}
    =
    2i\Theta,
\end{equation}
where $\Theta = -\Theta^\rT \in \mR^{n\x n}$ is a constant matrix. In particular, if the system variables are implemented as the multiplication operators $X_{2k-1} := q_k$ and the differential operators $X_{2k} := p_k:= -i\d_{q_k}$ for $k = 1, \ldots, \nu$ on the Schwartz space  \cite{V_2002},  
so that $[q_j,p_k] = i\delta_{jk}$ (with $\delta_{jk}$ the Kronecker delta), then
$\Theta = \frac{1}{2} I_\nu \ox \bJ$, where $I_\nu$ is the identity matrix of order $\nu$, $\ox$ is the Kronecker product, and
\begin{equation}
\label{bJ}
        \bJ
        : =
        {\begin{bmatrix}
        0 & 1\\
        -1 & 0
    \end{bmatrix}}.
\end{equation}
Irrespective of this special case (that is, for any real antisymmetric CCR matrix $\Theta$),
a combination of  (\ref{H}), (\ref{XCCR}) and the derivation property of the commutator leads to
\begin{equation}
\label{A0}
    A_0 = 2\Theta R
\end{equation}
in (\ref{Xdot}), which is a Hamiltonian matrix
in the sense of the symplectic structure specified by $\Theta$. Therefore, if $R=0$, that is, when  the oscillator has zero Hamiltonian, not only $H$ is preserved in time (which follows from $\dot{H} = i[H,H] = 0$ for any Hamiltonian, similarly to the classical case), but so also are the system variables $X_1, \ldots, X_n$ themselves.
The conservation of a set of noncommutative quantum  variables (rather than a single Hamiltonian) can be regarded as a quantum mechanical resource which makes such an oscillator potentially applicable as a storage device. We omit the discussion of ``initialization'' and ``retrieval''   of the system variables, which correspond   to the ``write'' and ``read'' memory operations and are beyond the scope of this paper. Instead, we will consider the effect of coupling between the quantum system and the environment (including the external fields), which makes the system variables evolve in time, usually in a dissipative fashion. Note that for the isolated oscillator (\ref{Xdot}) with a nonsingular CCR matrix, that is,
\begin{equation}
\label{detTheta}
  \det \Theta \ne 0
\end{equation}
(which is assumed in what follows) and a positive definite energy matrix $R\succ 0$, the matrix $A_0 = 2 R^{-1/2}\sqrt{R} \Theta \sqrt{R} \sqrt{R} $ in (\ref{A0}) is diagonalizable and has a purely imaginary spectrum (as any nonsingular real antisymmetric matrix). A similar argument applies to negative definite energy matrices $R$, in which case, $A_0$ is also diagonalizable with a purely imaginary spectrum. Therefore, if $R$ is positive or negative definite,  then the system variables perform an oscillatory motion with no dissipation because the dependence of $X(t) = \re^{tA_0} X(0)$ on the initial condition $X(0)$ does not fade away, as $t   \to +\infty$, and $X(t)$ does not ``lose memory'' about  $X(0)$.

\section{OPEN QUANTUM HARMONIC OSCILLATOR}
\label{sec:open}

In comparison with the isolated system dynamics (\ref{Xdot}), a more realistic setting is concerned with an OQHO in Fig.~\ref{fig:OQHO},
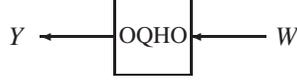
\begin{figure}[htbp]
\centering
\unitlength=1mm
\linethickness{0.5pt}
\begin{picture}(40.00,10.00)
    \put(15,-2){\framebox(10,10)[cc]{\small OQHO}}
    \put(35,3){\vector(-1,0){10}}
    \put(15,3){\vector(-1,0){10}}
    \put(38,3){\makebox(0,0)[cc]{$W$}}
    \put(2,3){\makebox(0,0)[cc]{$Y$}}
\end{picture}
\caption{A schematic depiction of an OQHO with the input field $W$ and output field $Y$ governed by (\ref{dXdY}).}
\label{fig:OQHO}
\end{figure}
whose internal and output variables evolve in time according to linear Hudson-Parthasarathy 
QSDEs \cite{NY_2017,P_2017}:
\begin{equation}
\label{dXdY}
    \rd X  = AX \rd t + B\rd W,
    \qquad
    \rd Y  = CX \rd t + D\rd W,
\end{equation}
where  $A \in \mR^{n\x n}$, $B \in \mR^{n\x m}$, $C \in \mR^{r\x n}$, $D\in \mR^{r\x m}$  are constant matrices specified below, with $m$, $r$ even. Here, $W:=(W_k)_{1\< k \< m}$ is a vector of quantum Wiener processes which are time-varying self-adjoint operators on a symmetric Fock space \cite{PS_1972} $\fF$ representing the input bosonic fields. Unlike the components of the  classical Brownian motion \cite{KS_1991} in $\mR^m$, these quantum processes do not commute with each other and  have a complex positive semi-definite Hermitian Ito matrix  $\Omega = \Omega^* \succcurlyeq 0$:
\begin{equation}
\label{Omega}
    \rd W \rd W^\rT
    = \Omega \rd t,
    \qquad
    \Omega: = I_m + iJ,
    \qquad
    J
    :=
    I_{m/2}\ox\bJ,
\end{equation}
with $(\cdot)^*:= \overline{(\cdot)}{}^\rT$ the complex conjugate transpose and
$\bJ$ from (\ref{bJ}).
Its imaginary part $J= \Im \Omega$ specifies the two-point CCRs for the quantum Wiener process  $W$ as
$  [W(s), W(t)^\rT]
  =
  2i \min(s,t)J
$ for all $ s, t \> 0$.
Also, $Y:= (Y_k)_{1\< k\< r}$ in (\ref{dXdY}) is a vector of $r$ time-varying self-adjoint operators on the space $\fH$,   which are selected from the  output fields of the OQHO resulting from its interaction with the input fields. Accordingly, $D$ consists of conjugate pairs of  $r\< m$ rows  of a permutation matrix of order $m$, so that, without loss of generality,  $Y$ has the quantum Ito matrix $D\Omega D^\rT = I_r + i I_{r/2} \ox \bJ$ in view of (\ref{Omega}) and the Ito product rules (in particular, if $r=m$ and $D=I_m$, then $Y$ includes all the output fields).
The corresponding system-field space $\fH := \fH_0 \ox \fF$ is the tensor product of the initial system space $\fH_0$ (for the action of  $X_1(0), \ldots, X_n(0)$) and the Fock space $\fF$. Also, the system-field quantum state is assumed to be in the form
\begin{equation}
\label{rho}
    \rho:= \rho_0\ox \ups,
\end{equation}
where $\rho_0$ is the initial system state on $\fH_0$, and $\ups$ is the vacuum field state \cite{P_1992} on the Fock space $\fF$.
  While the internal energy of the OQHO is described by the Hamiltonian $H$ in  (\ref{H}) as before, the system-field coupling (due to the energy exchange between the oscillator and the external quantum fields) is quantified by a coupling matrix $N\in \mR^{m\x n}$. The latter parameterizes the linear dependence of a vector  $NX$ of $m$ self-adjoint coupling operators on the system variables. Accordingly, the matrices $A$, $B$, $C$ in (\ref{dXdY}) depend on the energy and coupling matrices $R$, $N$ and the CCR matrices  $\Theta$, $J$  from (\ref{XCCR}), (\ref{Omega}) as
\begin{align}
\label{AA}
    A & = A_0 +\wt{A},
    \qquad
    \wt{A}:= 2\Theta N^\rT J N
    =
    -\tfrac{1}{2}BJB^\rT \Theta^{-1},\\
\label{BC}
     B & = 2\Theta N^\rT,
     \qquad\
     C = 2DJN,
\end{align}
which imposes physical realizability (PR) constraints \cite{JNP_2008,SP_2012} on $A$, $B$, $C$.
If the oscillator is decoupled from the environment (that is, $N=0$), the matrices $\wt{A}$, $B$, $C$ vanish. In this case, $A=A_0$ as given by (\ref{A0}),  and the first QSDE in (\ref{dXdY}) reduces to the ODE (\ref{Xdot}), while the second QSDE in (\ref{dXdY}) takes the form $\rd Y = D\rd W$, so that the output field $Y$ is not affected by the system. In general, the matrix $A$ in (\ref{AA}) is decomposed into its Hamiltonian part $A_0$ and the skew Hamiltonian part $\wt{A}$ since the matrix $N^\rT J N$ is antisymmetric. Therefore, in the zero-Hamiltonian case of $R=0$, when $A_0 =0$,  the matrix $A = \wt{A}$ is skew Hamiltonian. Hence, by \cite[Theorem~3 on p.~145]{FMMX_1999}, there exists a Hamiltonian square root of $A$:
\begin{equation}
\label{root}
    A = (\Theta S)^2,
    \qquad
    S = S^\rT \in \mR^{n\x n},
\end{equation}
where, in view of the standing assumption (\ref{detTheta}),  the matrix $S$  satisfies
\begin{equation}
\label{STS}
    2N^\rT J N = S \Theta S.
\end{equation}
In this case, due to the squaring in (\ref{root}),  each nonzero eigenvalue of $A$ has an even multiplicity since the spectrum of the Hamiltonian matrix $\Theta S$ is centrally symmetric about the origin in the complex plane $\mC$. Furthermore, if the matrix $S$ is positive or negative definite, then, as discussed previously,  the spectrum of $\Theta S$ is purely imaginary, and the matrix $A$ is Hurwitz (since $(i\omega)^2 = -\omega^2 <0$ for any $\omega \in \mR\setminus \{0\}$). The latter property leads to an exponentially fast decay in the dependence of the solution
\begin{equation}
\label{Xsol}
    X(t)
    =
    \re^{tA} X(0)
    +
    Z(t),
    \qquad
    t\> 0,
\end{equation}
of the first QSDE in (\ref{dXdY}) on $X(0)$, as $t\to +\infty$. Here, in view of  (\ref{rho}),  with the input field $W$ being in the vacuum state $\ups$ on $\fF$,     the response
\begin{equation}
\label{Z}
    Z(t)
    :=
    \int_0^t
    \re^{(t-s)A}
    B
    \rd W(s)
\end{equation}
of the system variables to $W$ is a zero-mean Gaussian quantum process on the Fock space $\fF$ (see, for example, \cite[Section~3]{VPJ_2018a}), which commutes with and is statistically independent of $X(0)$. These properties of $Z$ hold regardless of whether the matrix $A$ is Hurwitz.

Now, the idea of using the zero-Hamiltonian OQHO for storage (and hence, avoiding the dissipation)   suggests that of particular interest are those coupling matrices $N$,  for which the matrix $S$ in (\ref{root}), (\ref{STS}) is, with necessity, not definite and is such that the spectrum of the Hamiltonian square root $\Theta S$ has a nonempty intersection with the set
\begin{equation}
\label{fS}
  \fS
  :=
  \{\lambda \in \mC: \Re (\lambda^2) = 0\}
  =
  \{\lambda \in \mC: |\Re \lambda| = |\Im \lambda|\}
\end{equation}
shown in Fig.~\ref{fig:fS}.
\begin{figure}[htbp]
\centering
\unitlength=1mm
\linethickness{0.5pt}
\begin{picture}(20.00,17.00)
    \put(0,8){\vector(1,0){20}}
    \put(10,-2){\vector(0,1){20}}
    \put(0,-2){\line(1,1){20}}
    \put(20,-2){\line(-1,1){20}}
    \put(19,5){\makebox(0,0)[cc]{\small$\Re$}}
    \put(7,17){\makebox(0,0)[cc]{\small$\Im$}}
    \put(8.9,5){\makebox(0,0)[cc]{\small$0$}}
\end{picture}
\caption{The set $\fS$ in (\ref{fS}) is the union of two straight lines bisecting the orthants of the complex plane. }
\label{fig:fS}
\end{figure}
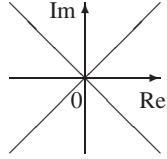
Indeed, for any such system-field coupling, at least one of the eigenvalues of the matrix $A$ in (\ref{root}) is imaginary. On the other hand, if the spectrum of the matrix $\Theta S$ is contained by the set $\fS$ in (\ref{fS}), so that all the eigenvalues of $A$ are either zero or purely imaginary and the dependence of $X(t)$ in (\ref{Xsol}) on $X(0)$ due to the term $\re^{tA}X(0)$ does not decay,  then the response $Z(t)$ to the input field in (\ref{Z}) grows with time $t$,  which makes this dependence increasingly distorted by the quantum noise. These two qualitatively different cases (when $A$ is Hurwitz and when its spectrum is purely imaginary) can be compared to each other in terms of the relative ``size'' of $\re^{tA}X(0)$ (as a useful ``signal'' carrying information about $X(0)$)  in comparison with the noise term $Z(t)$.

\section{DEVIATION FROM INITIAL CONDITION}
\label{sec:dev}

The  extent to which the dependence of $X(t)$ in (\ref{Xsol}) on $X(0)$ is noise-corrupted can be described in terms of the second-moment matrix
\begin{align}
\nonumber
    V(t)
     :=&
    \bE(Z(t)Z(t)^\rT)
    =\!\!
    \int_{[0,t]^2}
    \re^{(t-u)A}
    B
    \bE(\rd W(u)\rd W(v)^\rT)
    B^\rT
    \re^{(t-v)A^\rT}\\
\label{V}
    =&
    \int_0^t
    \re^{sA}
    B
    \Omega
    B^\rT
    \re^{sA^\rT}
    \rd s,
    \qquad
    t \> 0
\end{align}
(where $\Omega$ is the quantum Ito matrix from (\ref{Omega})),
which coincides with the finite-horizon controllability Gramian of the pair $(A, B\sqrt{\Omega})$ over the time interval $[0,t]$ and satisfies the Lyapunov ODE
$    \dot{V}(t) = AV(t) + V(t)A^\rT + B\Omega B^\rT
$,
with the zero initial condition $V(0) = 0$. 
The ``size'' of the  deviation
\begin{equation}
\label{xi}
  \xi(t)
  :=
  X(t)-X(0)
  =
  (\re^{tA}-I_n)X(0) + Z(t)
\end{equation}
of $X(t)$ from $X(0)$
can be quantified in a mean-square sense by using the matrix (\ref{V}) as
\begin{align}
\nonumber
    \Ups(t)
    := &
    \bE (\xi(t)\xi(t)^\rT)
    =
    (\re^{tA}-I_n)
    \Pi
    (\re^{tA^\rT}-I_n)
    +
    V(t)
    \\
\nonumber
    & +
    (\re^{tA}-I_n)
    \bE (X(0) Z(t)^\rT)
    +
    \bE (Z(t) X(0)^\rT)
        (\re^{tA^\rT}-I_n)\\
\label{Exixi}
        =&
    (\re^{tA}-I_n)
    \Pi
    (\re^{tA^\rT}-I_n)
    +
    V(t),
\end{align}
where $\bE (X(0) Z(t)^\rT) = \bE X(0) \bE Z(t)^\rT = 0$ due to the commutativity and statistical independence between $X(0)$ and $Z(t)$ combined with $\bE Z(t) = 0$. Here, $\Pi$ is the second-moment matrix of the initial system variables (assuming that $\bE(X(0)^\rT X(0)) = \sum_{k=1}^n \bE (X_k(0)^2)< +\infty$):
\begin{equation}
\label{EXX0}
    \Pi
    :=
    \bE (X(0)X(0)^\rT)
    =
    P + i\Theta,
    \qquad
    P:= \Re \Pi ,
\end{equation}
where the second equality follows from (\ref{XCCR}).
A scalar-valued mean-square measure for the deviation (\ref{xi}) is provided by
\begin{align}
\nonumber
    \Delta(t)
    & :=
    \bE(\xi(t)^\rT \Sigma\xi(t))
    =
    \bra \Sigma,  \Re \Ups(t)\ket\\
\label{ExiTxi}
    & =
    \|F(\re^{tA}-I_n)
    \sqrt{P}\| ^2 +
    \bra \Sigma, \Re V(t)\ket ,
\end{align}
where $0 \preccurlyeq\Sigma = \Sigma^\rT \in \mR^{n\x n}$ is a given weighting matrix which specifies the relative importance of the system variables, and  $\bra\cdot, \cdot\ket $ is   the Frobenius inner product generating the Frobenius norm $\|\cdot\| $ of matrices \cite{HJ_2007}. Here, use is made of the Hermitian property of the matrices $V(t)$, $\Ups(t)$, $\Pi$ in (\ref{V}), (\ref{Exixi}), (\ref{EXX0}),  whereby their imaginary parts  are antisymmetric matrices (in particular, $\Im \Pi = \Theta$) and hence, orthogonal to the subspace of symmetric matrices. The weighting matrix can be  factorized as
\begin{equation}
\label{FF}
    \Sigma := F^\rT F,
    \qquad
    F \in \mR^{s\x n},
    \qquad
    s:= \rank \Sigma \< n,
\end{equation}
so that (\ref{ExiTxi}) is concerned with $s$ independent linear combinations of the system variables of interest with real coefficients comprising the rows of the matrix $F$ (which is of full row rank).  
The Taylor series expansion of (\ref{ExiTxi}) as a function of time, truncated to the first two terms,  is  given by
\begin{equation}
\label{Delasy0}
    \Delta(t)
     =
     \dot{\Delta}
     t +
     \tfrac{1}{2}
     \ddot{\Delta}
     t^2
     +
     O(t^3),
     \qquad
     {\rm as}\
     t \to 0+,
\end{equation}
where the appropriate time derivatives of $\Delta$ at $t=0$  are computed as
\begin{equation}
\label{Deldot}
  \dot{\Delta}
  =
   \|F B\| ^2,
   \qquad
   \ddot{\Delta}
   =
   \bra \Sigma, ABB^\rT+BB^\rT A^\rT + 2A P A^\rT\ket .
\end{equation}
At the opposite extreme,
the infinite-horizon behaviour of (\ref{ExiTxi}) depends on whether the OQHO is dissipative. More precisely, if the matrix $A$ is Hurwitz, and hence, $\lim_{t\to +\infty}\re^{tA} = 0$,  then (\ref{V}), (\ref{ExiTxi}) imply that
$
    \lim_{t\to +\infty}
    \Delta(t)
    =
    \|F\sqrt{P}\| ^2 +
    \bra \Sigma, P_\infty\ket
    =
    \|F\sqrt{P+P_\infty}\|^2
$,
where
$
    P_\infty
    :=
    \int_0^{+\infty}
    \re^{tA}
    B
    B^\rT
    \re^{tA^\rT}
    \rd t
$
is the infinite-horizon controllability Gramian of the pair $(A,B)$ satisfying the algebraic Lyapunov equation (ALE) $AP_\infty + P_\infty A^\rT + BB^\rT = 0$. However, if the matrix $A$ has a purely imaginary spectrum $\{\pm i\omega_k: k =1, \ldots, \nu\}$ with pairwise different eigenfrequencies $\omega_k>0$ for $k = 1, \ldots, \nu$, then $\re^{tA}$ does not decay, as $t\to +\infty$. Indeed, in this case, $A$ is diagonalizable as
\begin{equation}
\label{AU}
    A = i U \mho U^{-1},
    \qquad
    \mho := {\diag}_{1\< k \< n} (\omega_k),
\end{equation}
where $U := [U_1\,   \ldots\,   U_n] \in \mC^{n\x n}$ is a nonsingular matrix whose columns $U_k:= (U_{jk})_{1\< j \< n} \in \mC^n$ are the corresponding eigenvectors  of $A$:
\begin{equation}
\label{AUk}
    A U_k = i\omega_k U_k,
    \qquad
    k = 1, \ldots, n.
\end{equation}
Here, without loss of generality and in accordance with (\ref{nnu}),
\begin{equation}
\label{omUk}
    \omega_{k+\nu}
    := -\omega_k,
    \qquad
    U_{k+\nu} = \overline{U_k},
    \qquad
    k = 1, \ldots, \nu.
\end{equation}
Therefore,
\begin{equation}
\label{etA}
    \re^{tA}
    =
    U\re^{it\mho}U^{-1}
    =
    \sum_{k = 1}^n
    U_k\re^{i\omega_k t} (U^{-1})_{k\bullet}
\end{equation}
(with $(\cdot)_{k\bullet}$ the $k$th row of a matrix) is an oscillatory function of time.   Hence, by combining the relation
$
    \lim_{t\to +\infty}
    \big(
    \frac{1}{t}
    \int_0^t
    \re^{i\omega s }\rd s
    \big)
    =
    \Big\{
    {\small\begin{matrix}
     1\ {\rm if}\    \omega = 0\\
     0\ {\rm if}\  \omega \ne 0
    \end{matrix}}
    $,  which holds for any $\omega \in \mR$,
with the assumption that the extended eigenfrequencies $\omega_1, \ldots, \omega_n$ in (\ref{AU})--(\ref{omUk}) are all different, it follows from (\ref{V}), (\ref{etA}) that
\begin{align}
\nonumber
    \tfrac{1}{t}
    V(t)
    & =
    \tfrac{1}{t}
    U
    \int_0^t
    \re^{is\mho}
    U^{-1}
    B\Omega B^\rT
    U^{-*}
    \re^{-is\mho}
    \rd s
    U^*\\
\nonumber
    & =
    \tfrac{1}{t}
    \sum_{j,k=1}^n
    \int_0^t
    \re^{i(\omega_j-\omega_k) s}
    \rd s\,
    U_j
    (U^{-1}
    B\Omega B^\rT
    U^{-*})_{jk}
    U_k^*
    \\
\nonumber
    & \to
    \sum_{k=1}^n
    U_k
    (U^{-1}
    B\Omega B^\rT
    U^{-*})_{kk}
    U_k^*\\
\label{Vasy}
    & =
    U
    (
    I_n
    \od
    (U^{-1}
    B\Omega B^\rT
    U^{-*}))
    U^*,
    \qquad
    {\rm as}\
    t \to +\infty,
\end{align}
where $A^\rT = A^* = -iU^{-*} \mho U^*$ since $A$ is real, with $(\cdot)^{-*}:= ((\cdot)^{-1})^*$, and
$\od$ is the Hadamard product \cite{HJ_2007}.  By (\ref{Vasy}), the matrix $V(t)$ grows asymptotically linearly in time $t$, and so does its real part in (\ref{ExiTxi}). The first term on the right-hand side of (\ref{ExiTxi}) is an oscillatory function with nonnegative values vanishing at $t=0$:
\begin{align*}
\nonumber
    \|&F(\re^{tA}-I_n)
    \sqrt{P}\| ^2
     =
    \|F U(\re^{it\mho}-I_n) U^{-1}
    \sqrt{P}\| ^2    \\
\nonumber
    = &
    \sum_{j,k=1}^n
    \bra
        F
        U_j (U^{-1})_{j\bullet}
        \sqrt{P},
        F
        U_k (U^{-1})_{k\bullet}
        \sqrt{P}
    \ket
    (\re^{-i\omega_jt}-1)
    (\re^{i\omega_kt}-1)
    \\
    = &
    \sum_{j,k=1}^n
    U_j^*\Sigma U_k
    (U^{-1})_{k\bullet}
    P
    (U^{-*})_j
    (\re^{i(\omega_k-\omega_j)t}-\re^{-i\omega_jt}-\re^{i\omega_kt}+1),
\end{align*}
where (\ref{etA}) is used again, with $(U^{-*})_j$ the $j$th column of $U^{-*}$.

\section{MEMORY DECOHERENCE TIME}
\label{sec:time}

The quality of the OQHO,  as a 
storage  device retaining memory about its initial system variables,   can be described by a typical ``decoherence'' time during which the system variables do not drift too far away from their initial values. In the weighted mean-square approach of (\ref{ExiTxi}),  such time can be defined as
\begin{equation}
\label{tau}
    \tau
    :=
    \inf
    \big\{t\> 0:\
    \Delta(t)
    >  \eps \|F\sqrt{P}\| ^2
    \big\},
\end{equation}
where the convention $\inf \emptyset := +\infty$ is used. Here, $\eps>0$ is a small dimensionless parameter specifying the relative error threshold for the deviation of $X(t)$ from $X(0)$,  
and it is assumed that
\begin{equation}
\label{pos}
    F
    \sqrt{P}
    \ne 0
\end{equation}
(recall that the columns of the matrix $F$ in (\ref{FF}) are linearly dependent if $s<n$).
Note that the memory decoherence time $\tau$, defined by (\ref{tau}),  depends on the fidelity parameter $\eps$ in a nondecreasing fashion. Furthermore, $\tau>0$ for any $\eps>0$ since $\Delta(t)$ is a continuous function of  $t\> 0$ satisfying $\Delta(0) = 0$.  Under a similar condition
\begin{equation}
\label{pos1}
    FB \ne 0,
\end{equation}
it follows from
(\ref{Delasy0}) that $\tau$ is asymptotically linear in $\eps$, with
\begin{equation}
\label{ratio}
    \lim_{\eps\to 0+}
    \frac{\tau}{\eps}
    =
    \frac{\|F\sqrt{P}\| ^2}{\dot{\Delta}}
    =
    \frac{\|F\sqrt{P}\| ^2}{\|FB\| ^2}
    =:
    \tau'
\end{equation}
being its derivative at $\eps =0$.
The quantity $\tau'$ is analogous to the signal-to-noise ratio since $P$ in the nominator pertains to the initial condition $X(0)$ to be stored (see (\ref{EXX0})), while $BB^\rT = \Re (B\Omega B^\rT)$ in the denominator is associated with the quantum diffusion matrix coming from the system-field  coupling in view of (\ref{Omega}), (\ref{BC}). However, $\tau'$  in (\ref{ratio}) has the physical dimension of time. By using both leading terms from (\ref{Delasy0}), the asymptotic relation (\ref{ratio}) can be extended to
\begin{equation}
\label{tau12}
    \tau =
      \wh{\tau}
     + O(\eps^3),
     \qquad
       \wh{\tau}
  :=
  \tau' \eps + \tfrac{1}{2}\tau'' \eps^2,
    \qquad
    {\rm as}\
    \eps \to 0+,
\end{equation}
where the second-order derivative $\tau''$  of $\tau$ at $\eps=0$ is obtained as
\begin{equation}
\label{tau2}
    \tau''
      =
    -
    \frac{\ddot{\Delta} \tau'^2}
    {\dot{\Delta}}
    =
    -
    \frac{\bra \Sigma, A BB^\rT + BB^\rT A^\rT  + 2APA^\rT\ket
    \|F\sqrt{P}\|^4}
    {\|FB\|^6}
\end{equation}
by matching the truncated expansions. Note that, in contrast to $\tau'$ in (\ref{ratio}), the coefficient $\tau''$ depends on both matrices $A$ and $B$, with the dependence on $A$ entering $\tau''$ only through $\ddot{\Delta}$ from (\ref{Deldot}).

A relevant performance criterion    for the OQHO as a quantum memory system   is provided by the maximization
\begin{equation}
\label{taumax}
  \tau\longrightarrow \sup
\end{equation}
of the decoherence time (or its second-order approximation $\wh{\tau}$
in (\ref{tau12})) at a given fidelity level $\eps$. Such maximization can be carried out by varying the energy and coupling parameters of the OQHO. 
Furthermore, the factor $F$ in (\ref{FF}) can also be part of the parameters over which $\tau$ is maximized. For example, the matrix $F$ can be varied so as to find a particular subset of the system variables of the OQHO (or their linear combinations) which are retained with the relative accuracy $\eps$  for a longer period of time than the others.

Note that the second-order approximation $\wh{\tau}$ of the decoherence time $\tau$ in (\ref{tau12}) depends on the matrix $A$ in a concave quadratic fashion, inheriting this property from $\tau''$ in (\ref{tau2}). Therefore, at least for small values of $\eps$,  the maximization (\ref{taumax}) tends to favour ``localized'' values  of $A$. More precisely, under the condition
\begin{equation}
\label{P0pos}
  P \succ 0
\end{equation}
on the matrix $P$ in (\ref{EXX0}), the completion of the square  leads to
$
    A BB^\rT + BB^\rT A^\rT  + 2APA^\rT
    =
    2 (A - \wh{A}) P (A - \wh{A})^\rT
    -
    \frac{1}{2}
    BB^\rT P^{-1} BB^\rT
$,
where
\begin{equation}
\label{Ahat}
    \wh{A}
    :=
    -\tfrac{1}{2}
    BB^\rT P^{-1} .
\end{equation}
Hence,  the $A$-dependent factor in the numerator of (\ref{tau2}) (see also (\ref{Deldot})) takes the form
\begin{equation}
\label{quad}
    \ddot{\Delta}
     =
    2
    \|F (A - \wh{A}) \sqrt{P}\|^2
    -
    \tfrac{1}{2}
    \|F BB^\rT P^{-1/2}\|^2
\end{equation}
and achieves $\min_{A \in \mR^{n\x n}} \ddot{\Delta} =
        - \frac{1}{2}
    \|F BB^\rT P^{-1/2}\|^2$
at $A = \wh{A}$.
The matrix $\wh{A}$ in (\ref{Ahat}) is isospectral to $P^{-1/2} \wh{A} \sqrt{P} = -\frac{1}{2} P^{-1/2} BB^\rT P^{-1/2} \preccurlyeq 0$, whereby all its eigenvalues are real and nonpositive.  The above remark on  the unconstrained minimum of $\ddot{\Delta}$ over $A$, based on (\ref{quad}),   ignores the PR structure (\ref{AA}) which makes $A$ quadratically dependent  on the coupling matrix $N$ and linearly dependent on the energy matrix $R$. The PR constraints are taken into account in the following theorem which provides a suboptimal solution of the problem (\ref{taumax}).

\begin{thm}
\label{th:R}
Suppose the fidelity level $\eps$ in (\ref{tau}) and the  weighting matrix $\Sigma$  in (\ref{FF}) are fixed together with the matrix $P$ in (\ref{EXX0}), and the conditions (\ref{detTheta}), (\ref{pos}), (\ref{pos1}) are satisfied.  Then for any given coupling matrix $N$ of the OQHO (\ref{dXdY}) in (\ref{AA}), (\ref{BC}), the energy matrix $R$ in (\ref{H}),  (\ref{A0})  delivers a solution to the problem
\begin{equation}
\label{tauhatmax}
  \wh{\tau} \longrightarrow \sup
  \quad{\rm over}\
  R = R^\rT \in \mR^{n\x n}
\end{equation}
of maximizing the approximate decoherence time $\wh{\tau}$ in
(\ref{tau12}) if and only if
\begin{equation}
\label{RASE}
    \Theta \Sigma \Theta R P  +
    P R \Theta \Sigma \Theta + K = 0,
\end{equation}
where $K = K^\rT \in \mR^{n\x n}$ is an auxiliary matrix  defined by
\begin{equation}
\label{K}
    K:=
    \tfrac{1}{4}
    (\Theta \Sigma (BB^\rT + 2\wt{A} P)
    -
     (BB^\rT + 2P \wt{A}^\rT) \Sigma \Theta ).
\end{equation}
\end{thm}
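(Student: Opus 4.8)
The plan is to reduce the maximization over $R$ to an unconstrained convex problem and then read off its stationarity condition. First I would observe that in $\wh{\tau} = \tau'\eps + \tfrac12\tau''\eps^2$ of (\ref{tau12}), the ratio $\tau'$ of (\ref{ratio}) and the factor $\tau'^2/\dot{\Delta} = \|F\sqrt{P}\|^4/\|FB\|^6$ appearing in (\ref{tau2}) depend only on $F$, $P$ and $B$, all of which are held fixed once $N$ is fixed (recall $B = 2\Theta N^\rT$). Since $\tau'' = -\ddot{\Delta}\,\tau'^2/\dot{\Delta}$ with a strictly negative prefactor (as $\dot{\Delta} = \|FB\|^2 > 0$ by (\ref{pos1})), maximizing $\wh{\tau}$ over symmetric $R$ at a fixed $\eps > 0$ is equivalent to minimizing $\ddot{\Delta}$ from (\ref{Deldot}). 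Because $A = 2\Theta R + \wt{A}$ is affine in $R$ while the quadratic-in-$A$ part of $\ddot{\Delta}$ equals $2\bra\Sigma, APA^\rT\ket = 2\|FA\sqrt{P}\|^2 \> 0$ (using $P = \Re\Pi \succcurlyeq 0$), the map $R\mapsto \ddot{\Delta}$ is convex. Hence $R$ solves (\ref{tauhatmax}) if and only if the first variation of $\ddot{\Delta}$ vanishes, and this is precisely what turns the characterization into an ``if and only if''.

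Next I would compute that first variation. Perturbing $R \mapsto R + \delta R$ with $\delta R = \delta R^\rT$ gives $\delta A = 2\Theta\,\delta R$, and, writing $G := BB^\rT$,
\begin{equation*}
    \delta\ddot{\Delta}
    =
    \bra\Sigma,\ \delta A\, G + G\,\delta A^\rT + 2\,\delta A\, P A^\rT + 2 A P\,\delta A^\rT\ket .
\end{equation*}
The key simplification is to gather every term into the form $\Tr(\,\cdot\,\delta A)$: using the symmetry of $\Sigma$, $G$, $P$ together with the identity $\Tr(\Psi\,\delta A^\rT) = \Tr(\Psi^\rT\delta A)$, the two transposed terms reproduce the two untransposed ones, so that $\delta\ddot{\Delta} = 2\Tr(\Phi\,\delta A)$ with $\Phi := G\Sigma + 2 P A^\rT\Sigma$. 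Substituting $\delta A = 2\Theta\,\delta R$ yields $\delta\ddot{\Delta} = 4\Tr(\Phi\Theta\,\delta R)$.

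Since $\delta R$ ranges over all symmetric matrices, stationarity is equivalent to the vanishing of the symmetric part of $\Phi\Theta$, i.e. $\Phi\Theta + (\Phi\Theta)^\rT = 0$. Using $\Theta^\rT = -\Theta$ and the symmetry of $G$, $\Sigma$, $P$, this reads
\begin{equation*}
    G\Sigma\Theta - \Theta\Sigma G + 2 P A^\rT\Sigma\Theta - 2\Theta\Sigma A P = 0 .
\end{equation*}
Finally I would insert $A = 2\Theta R + \wt{A}$, so that $A^\rT = -2R\Theta + \wt{A}^\rT$; this produces the $R$-dependent terms $-4\,PR\Theta\Sigma\Theta - 4\,\Theta\Sigma\Theta RP$ together with an $R$-free remainder built from $G$ and $\wt{A}$. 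Multiplying through by $-\tfrac14$ turns the $R$-terms into $\Theta\Sigma\Theta RP + PR\Theta\Sigma\Theta$ and regroups the remainder into exactly $\tfrac14\big(\Theta\Sigma(G + 2\wt{A}P) - (G + 2P\wt{A}^\rT)\Sigma\Theta\big)$, which is the matrix $K$ of (\ref{K}) with $G = BB^\rT$, giving (\ref{RASE}).

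The main obstacle is purely the transpose/symmetry bookkeeping in the last two steps: one must be careful that restricting $\delta R$ to symmetric perturbations selects only the symmetric part of $\Phi\Theta$ (so that an apparently antisymmetric stationarity equation results), and that the $\wt{A}$- and $G$-contributions regroup precisely into the antisymmetric combination defining $K$ rather than into some other arrangement. Everything else is the convexity argument, which is routine given $\Sigma\succcurlyeq 0$ and $P\succcurlyeq 0$.
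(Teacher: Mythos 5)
Your proposal is correct and follows essentially the same route as the paper's proof: reduce (\ref{tauhatmax}) to the convex minimization of $\ddot{\Delta}$ over symmetric $R$ (so that stationarity is necessary and sufficient), compute the first variation via $\delta A = 2\Theta\,\delta R$, and extract the symmetric part of the resulting gradient to obtain (\ref{RASE}) with $K$ as in (\ref{K}). The only differences are notational (trace identities in place of the Frobenius inner product and the symmetrizer $\bS$) and that you spell out the convexity of $A\mapsto 2\bra\Sigma, APA^\rT\ket$ explicitly, which the paper merely asserts.
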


\begin{proof}
Since the coefficient $\tau'$ in (\ref{ratio}) does not depend on $R$, then $\sup_R \wh{\tau} = \tau' \eps + \frac{1}{2} \eps^2 \sup_R \tau''$ in view of (\ref{tau12}), and hence, (\ref{tauhatmax}) is equivalent to maximizing $\tau''$ in (\ref{tau2}) over $R$, which in turn reduces to
\begin{equation}
\label{Delmin}
  \ddot{\Delta}
  \longrightarrow \inf
  \quad{\rm over}\
  R = R^\rT \in \mR^{n\x n}.
\end{equation}
From the convex dependence of $\ddot{\Delta}$ on the matrix $A$ in (\ref{Deldot}) and the affine dependence of $A$ on $R$ in (\ref{AA}), it follows that (\ref{Delmin}) is a convex minimization problem on the subspace $\mS_n$ of real symmetric matrices in $ \mR^{n\x n}$. Hence, the first-order optimality condition
\begin{equation}
\label{opt}
    \d_R \ddot{\Delta} = 0
\end{equation}
for the Frechet derivative of $\ddot{\Delta}$ in $R$  is necessary and sufficient for $R$ to deliver a global  minimum to $\ddot{\Delta}$. By the second equality in (\ref{Deldot}), the first variation of $\ddot{\Delta}$ with respect to $A \in \mR^{n\x n}$ takes the form
\begin{align}
\nonumber
    \delta \ddot{\Delta}
    & =
    \bra
    \Sigma,
    (\delta A) BB^\rT+BB^\rT \delta A^\rT + 2(\delta A)  P A^\rT
    +
    2A  P \delta A^\rT
    \ket\\
\label{delta}
    & =
    2\bra
        \Sigma(BB^\rT + 2AP),
        \delta A
    \ket
\end{align}
due to the symmetry of $\Sigma$, $P$.
Since the matrix $\wt{A}$ in (\ref{AA}) is fixed, then  $\delta  A = \delta A_0 = 2\Theta \delta R$ in view of (\ref{A0}),  and its substitution into (\ref{delta}) yields
\begin{align}
\nonumber
    \delta \ddot{\Delta}
    & =
    4\bra
        \Sigma(BB^\rT + 2AP),
        \Theta \delta R
    \ket
    =
    -4\bra
        \Theta \Sigma(BB^\rT + 2AP),
        \delta R
    \ket    \\
\label{delta1}
    & =
    -4\bra
        \bS(\Theta \Sigma(BB^\rT + 2AP)),
        \delta R
    \ket.
\end{align}
Here, use is also made of the antisymmetry  of $\Theta$ and the symmetry of $R$ along with the matrix symmetrizer $\bS(L):= \frac{1}{2}(L+L^\rT)$ (so that $\bS$ is the orthogonal projection from $\mR^{n\x n}$ onto $\mS_n$). From (\ref{delta1}), it follows that $\d_R \ddot{\Delta} = -4\bS(\Theta \Sigma(BB^\rT + 2AP))$, whereby the condition (\ref{opt}), combined with (\ref{A0}), (\ref{AA}),   acquires the form
\begin{align}
\nonumber
    0 & =
    \bS(\Theta \Sigma(BB^\rT + 2AP))
    =
    \bS(\Theta \Sigma(BB^\rT + 2\wt{A}P))
    +
    4\bS(\Theta \Sigma \Theta R P)\\
\label{stat}
    & =
    2(K + \Theta \Sigma \Theta R P  +
    P R \Theta \Sigma \Theta),
\end{align}
which uses (\ref{K}) and establishes (\ref{RASE}) as a necessary and sufficient condition of optimality for the problem (\ref{tauhatmax}).
\end{proof}

By (\ref{RASE}), the zero energy matrix $R=0$ is an optimal solution of the problem (\ref{tauhatmax}) (and thus a suboptimal solution of (\ref{taumax}) for small values of $\eps$) if and only if the matrix $K$ in (\ref{K}) vanishes:
\begin{equation}
\label{K0}
    \Theta \Sigma (BB^\rT -BJB^\rT \Theta^{-1}P)
    -
     (BB^\rT - P \Theta^{-1}BJB^\rT ) \Sigma \Theta
     =
     0,
\end{equation}
where the second equality from (\ref{AA}) is used. Since $B$ in (\ref{BC}) depends linearly on the coupling matrix  $N$, the relation (\ref{K0}) is a quadratic constraint on $N$ under which $R=0$ is beneficial for maximizing the memory decoherence time of the OQHO in the framework of the approximation $\tau\approx \wh{\tau}$ in  (\ref{tau12}). Beyond the zero-Hamiltonian case, if (\ref{P0pos}) holds and the weighting matrix satisfies  $\Sigma \succ 0$, then $\Theta \Sigma \Theta = -\Theta \Sigma \Theta^\rT \prec  0$, thus making the matrix  $\Theta \Sigma \Theta P^{-1}$ Hurwitz and allowing $R$ to be found from (\ref{RASE}) in terms of an auxiliary matrix  $G:= P R P$  satisfying the following ALE:
\begin{equation}
\label{RG_GALE}
    R = P^{-1} G P^{-1},
    \qquad
    \Theta \Sigma \Theta P^{-1} G   +
    G P^{-1}\Theta \Sigma \Theta + K = 0.
\end{equation}
Note that the coupling matrix $N$ enters (\ref{RG_GALE}) through $K$ in (\ref{K}).

\section{COHERENT OQHO INTERCONNECTION}
\label{sec:two}

In addition to varying the energy and coupling parameters of a single OQHO for maximizing the memory decorence time $\tau$ in (\ref{tau}), similar settings can be formulated for interconnections of such systems. As an example, consider two OQHOs from \cite[Section~8]{VP_2023_SCL} which interact with external input bosonic fields and    are coupled to each other through a direct energy coupling and an indirect field-mediated coupling \cite{ZJ_2011a}; see
Fig.~\ref{fig:system}.
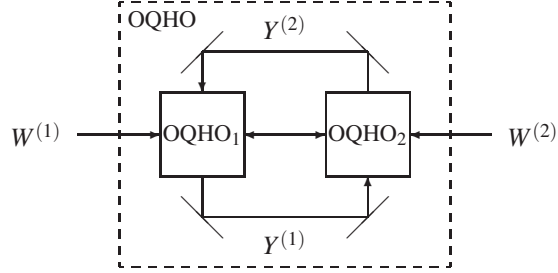
\begin{figure}[htbp]
\centering
\unitlength=1.1mm
\linethickness{0.5pt}
\begin{picture}(50.00,32.00)
    \put(6,29){\makebox(0,0)[lt]{\small OQHO}}
    \put(5,-2){\dashbox(40,32)[cc]{}}
    \put(10,9){\framebox(10,10)[cc]{\small OQHO${}_1$}}
    \put(30,9){\framebox(10,10)[cc]{\small OQHO${}_2$}}
    \put(0,14){\vector(1,0){10}}
    \put(50,14){\vector(-1,0){10}}
     \put(35,19){\line(0,1){5}}
    \put(35,24){\line(-1,0){20}}
    \put(15,24){\vector(0,-1){5}}
     \put(15,9){\line(0,-1){5}}
    \put(15,4){\line(1,0){20}}
    \put(35,4){\vector(0,1){5}}
    \put(32.5,1.5){\line(1,1){5}}
    \put(32.5,26.5){\line(1,-1){5}}
    \put(12.5,21.5){\line(1,1){5}}
    \put(12.5,6.5){\line(1,-1){5}}
    \put(20,14){\vector(1,0){10}}
    \put(30,14){\vector(-1,0){10}}
    \put(-2,14){\makebox(0,0)[rc]{$W^{(1)}$}}
    \put(25,27){\makebox(0,0)[cc]{$Y^{(2)}$}}
    \put(52,14){\makebox(0,0)[lc]{$W^{(2)}$}}
    \put(25,1){\makebox(0,0)[cc]{$Y^{(1)}$}}
\end{picture}
\caption{
    A coherent interconnection of two OQHOs, interacting with external input quantum Wiener processes  $W^{(1)}$, $W^{(2)}$  and coupled to each other through a direct energy coupling (depicted by a double arrow) and a field-mediated coupling through the quantum Ito processes $Y^{(1)}$, $Y^{(2)}$ at their corresponding outputs.
}
\label{fig:system}
\end{figure}
In this  coherent (measurement-free) feedback interconnection,
the external fields  are modelled by
quantum Wiener processes  $W^{(1)}$, $W^{(2)}$ (of even dimensions $m_1$, $m_2$) on symmetric Fock spaces $\fF_1$, $\fF_2$, respectively. These processes form an augmented quantum Wiener process
$    W
    :=
    {\small\begin{bmatrix}
      W^{(1)}\\
      W^{(2)}
    \end{bmatrix}}
$ of dimension $m:= m_1+m_2$
on the composite Fock space $\fF:= \fF_1\ox \fF_2$ with the quantum Ito matrix $\Omega$ in (\ref{Omega}) and the individual Ito tables $
    \rd W^{(k)}\rd W^{(k)}{}^{\rT} = \Omega_k \rd t$, where
$
    \Omega_k
     := I_{m_k} + iJ_k$ and
$
    J_k:= I_{m_k/2} \ox \bJ
$, 
with the matrix
$\bJ$ from (\ref{bJ}), so that
$    \Omega
    =
    {\small\begin{bmatrix}
      \Omega_1 & 0\\
      0 & \Omega_2
    \end{bmatrix}}$ and $
    J=
    {\small\begin{bmatrix}
      J_1 & 0\\
      0 & J_2
    \end{bmatrix}}$, whereby $W^{(1)}$, $W^{(2)}$ commute with (and are statistically independent of) each other. The constituent OQHOs are endowed with initial spaces $\fH_k$ and vectors $X^{(k)}$ of even numbers $n_k$ of dynamic variables on the composite system-field space $\fH:= \fH_0\ox \fF$, where $\fH_0:= \fH_1 \ox \fH_2$. Accordingly, the vector $    X
    :=
    {\small\begin{bmatrix}
      X^{(1)}\\
      X^{(2)}
    \end{bmatrix}}
$ of $n:= n_1+n_2$ system variables of the augmented OQHO   satisfies (\ref{XCCR}) with
\begin{equation}
\label{TTT}
    \Theta
    :=
    {\begin{bmatrix}
      \Theta_1 & 0\\
      0 & \Theta_2
    \end{bmatrix}}
\end{equation}
formed from the individual CCR matrices $\Theta_k = -\Theta_k^\rT \in \mR^{n_k\x n_k}$ (which are assumed to satisfy $\det \Theta_k\ne 0$):
\begin{equation}
\label{Theta12}
    [X^{(k)},
      X^{(k)}{}^\rT] = 2i\Theta_k,
      \qquad
    [X^{(3-k)},
      X^{(k)}{}^\rT] = 0,
      \qquad
      k = 1, 2.
\end{equation}
The direct energy  coupling of the OQHOs (see the double arrow in Fig.~\ref{fig:system}) is modelled by complementing the individual Hamiltonians $H_k:= \frac{1}{2} X^{(k)}{}^\rT R_k X^{(k)}$ of the OQHOs, specified by their energy matrices $R_k = R_k^\rT \in \mR^{n_k \x n_k}$,  with an additional term
\begin{equation}
\label{Hint}
    H_{12}
    :=
    X^{(1)}{}^\rT R_{12} X^{(2)} = X^{(2)}{}^\rT R_{21} X^{(1)},
\end{equation}
parameterized by $R_{12} = R_{21}^\rT \in \mR^{n_1\x n_2}$ and using the commutativity in (\ref{Theta12}). With the additional indirect coupling of the OQHOs in Fig.~\ref{fig:system}, mediated by their output fields $Y^{(1)}$, $Y^{(2)}$ of even dimensions $r_1$, $r_2$, the interconnection is
governed by
\begin{align}
\label{x}
    \rd X^{(k)}
    & =
    (A_k X^{(k)} + F_k X^{(3-k)}) \rd t  +  B_k \rd W^{(k)}  + E_k \rd Y^{(3-k)} ,\\
\label{y}
    \rd Y^{(k)}
    & =
    C_k X^{(k)} \rd t  +  D_k \rd W^{(k)} ,
    \qquad
    k = 1, 2.
\end{align}
The matrices
$    A_k\in \mR^{n_k\x n_k}$,
    $
    B_k\in \mR^{n_k\x m_k}$,
    $
    C_k\in \mR^{r_k\x n_k}$,
    $
    E_k\in \mR^{n_k\x r_{3-k}}$,
    $
    F_k\in \mR^{n_k\x n_{3-k}}
$
are parameterized as
\begin{align}
\label{Ak_Bk}
    A_k
     & =
    2\Theta_k(R_k + N_k^{\rT}J_k N_k + L_k^{\rT}\wt{J}_{3-k}L_k),
    \qquad
        B_k
     = 2\Theta_k N_k^{\rT},\\
\label{Ck_Ek_Fk}
    C_k  & =2D_kJ_k N_k,
    \qquad
    E_k  = 2\Theta_k L_k^{\rT},
    \qquad
    F_k  = 2\Theta_k R_{k,3-k},
\end{align}
and the matrices     $
    D_k\in \mR^{r_k\x m_k}$ consist of conjugate pairs of $r_k \< m_k$ rows of permutation matrices of orders $m_k$ with $D_k \Omega_k D_k^\rT = I_{r_k} + i\wt{J}_k$, where
$    \wt{J}_k:= D_kJ_kD_k^{\rT}  = I_{r_k/2}\ox \bJ$. 
Here,
$N_k\in \mR^{m_k \x n_k}$,  $L_k\in \mR^{r_{3-k} \x n_k}$ are the matrices of coupling of the $k$th  OQHO to its external input field $W^{(k)}$ and the output $Y^{(3-k)}$ of the other OQHO, respectively.  By combining  (\ref{x}), (\ref{y}),  the closed-loop OQHO in Fig.~\ref{fig:system} satisfies the first QSDE in (\ref{dXdY}) with the matrices
 \begin{equation}
\label{cAB}
    A
    =
    {\begin{bmatrix}
        A_1 & F_1+ E_1C_2\\
        F_2 +E_2C_1 & A_2
    \end{bmatrix}},
    \qquad
    B
    =
    {\begin{bmatrix}
        B_1 & E_1D_2\\
        E_2D_1 & B_2
    \end{bmatrix}}
\end{equation}
and  by (\ref{Ak_Bk}), (\ref{Ck_Ek_Fk}) has the following energy and coupling matrices: 
\begin{equation}
\label{Rclos_Nclos}
    R  =
    R_0
    +
    \wt{R},
    \quad
  R_0 :=
  {\begin{bmatrix}
    R_1 & R_{12}\\
    R_{21} & R_2
  \end{bmatrix}},
  \quad
    N   =
    {\begin{bmatrix}
      N_1 & D_1^{\rT}L_2 \\
      D_2^{\rT}L_1 & N_2
    \end{bmatrix}}.
\end{equation}
Here, $R_0$ 
is the closed-loop system  energy matrix in the absence of the indirect field-mediated coupling (when $L_1 = 0$, $L_2 = 0$, and the Hamiltonian reduces to $H_1+H_2 + H_{12}$ in view of (\ref{Hint})), while
\begin{equation}
\label{Rt}
    \wt{R}
    =
    {\begin{bmatrix}
      0                                       & L_1^{\rT}D_2J_2 N_2 -N_1^\rT J_1 D_1^\rT L_2\\
      L_2^{\rT}D_1J_1 N_1 -N_2^\rT J_2 D_2^\rT L_1  & 0
    \end{bmatrix}}
\end{equation}
comes from the field-mediated coupling between the OQHOs.
The closed-loop coupling matrix $N$ in (\ref{Rclos_Nclos}) depends linearly on the individual coupling matrices $N_1$, $L_1$, $N_2$, $L_2$, while $\wt{R}$ in (\ref{Rt}) depends  on them in a quadratic fashion. The energy matrix $R$ in (\ref{Rclos_Nclos}) can be assigned any given value in $\mS_n$ by an appropriate choice of $R_0$ in (\ref{Rclos_Nclos}). In particular, the zero Hamiltonian for the closed-loop OQHO (that is, with $R=0$) is achieved by letting
\begin{equation}
\label{RRR}
    R_1 = 0,
    \qquad
    R_2 = 0,
    \qquad
    R_{12}
    =
    N_1^\rT J_1 D_1^\rT L_2 - L_1^{\rT}D_2J_2 N_2,
\end{equation}
in which case  zero-Hamiltonian OQHOs are interconnected in such a way that the direct coupling Hamiltonian (\ref{Hint}) counterbalances the effect of the field-mediated coupling, as seen from the off-diagonal blocks of the matrices $R$, $R_0$, $\wt{R}$  in (\ref{Rclos_Nclos}), (\ref{Rt}). Instead of (\ref{RRR}), the direct energy coupling matrix $R_{12}$ can be found as a solution of the optimization problem
\begin{equation}
\label{tauhatmaxR12}
  \wh{\tau} \longrightarrow \sup
  \quad{\rm over}\
  R_{12}\in \mR^{n_1\x n_2},
\end{equation}
similar to (\ref{tauhatmax}) except that $R_1$, $R_2$ (in general, nonzero)  are fixed. 
\begin{thm}
\label{th:R12}
Suppose the OQHO interconnection described by (\ref{TTT})--(\ref{Rt}) satisfies the assumptions of Theorem~\ref{th:R}. Then the matrix $R_{12}$ in (\ref{Hint}) solves the problem (\ref{tauhatmaxR12}) with the approximate decoherence time $\wh{\tau}$ in
(\ref{tau12}) for the closed-loop OQHO if and only if
\begin{align}
\nonumber
    \Theta_1 &\Sigma_{11}\Theta_1 R_{12} P_{22}   +
    P_{11}R_{12} \Theta_2 \Sigma_{22}\Theta_2    \\
\label{RASE12}
    & + \Theta_1 \Sigma_{12}\Theta_2 R_{12}^\rT  P_{12}
    +
    P_{12}R_{12}^\rT \Theta_1 \Sigma_{12}\Theta_2
    +
Q = 0,
\end{align}
where $Q \in \mR^{n_1\x n_2}$ is an auxiliary matrix  which is computed as
\begin{equation}
\label{K12}
    Q\!:=\!
    \tfrac{1}{2}
    (\bS
    (\Theta \Sigma (BB^\rT\! +\! 2\breve{A}P))_{12},
    \
    \breve{A}
    \!:= \!
    2\Theta
    \Big(
{\begin{bmatrix}
  R_1 & 0\\
  0 & R_2
\end{bmatrix}}
    \!+\! \wt{R} \!+\! N^\rT JN
    \Big),
\end{equation}
in terms of (\ref{Rclos_Nclos}), (\ref{Rt}),
with $(\cdot)_{jk}$ the appropriate matrix blocks.
\end{thm}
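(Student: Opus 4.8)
The plan is to follow the template of the proof of Theorem~\ref{th:R}, now treating $R_{12}$ rather than the full energy matrix as the free variable. First I would observe that in the closed-loop OQHO the input matrix $B$ in (\ref{cAB}) is assembled from $B_1,B_2,E_1,E_2,D_1,D_2$, and hence from the coupling matrices $N_1,L_1,N_2,L_2$ alone, so it does not depend on $R_{12}$. Consequently $\tau'$ in (\ref{ratio}) and $\dot\Delta=\|FB\|^2$ are fixed, and in view of (\ref{tau12}), (\ref{tau2}) the maximization (\ref{tauhatmaxR12}) is equivalent to $\ddot\Delta\to\inf$ over $R_{12}\in\mR^{n_1\x n_2}$. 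Since $\ddot\Delta$ is convex in $A$ (by (\ref{Deldot})) and, through $R_0$ in (\ref{Rclos_Nclos}), the matrix $A$ in (\ref{cAB}) is affine in $R_{12}$, this is a convex program and the stationarity condition $\d_{R_{12}}\ddot\Delta=0$ is both necessary and sufficient.

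Next I would recycle the first-variation formula (\ref{delta}), $\delta\ddot\Delta=2\bra\Sigma(BB^\rT+2AP),\delta A\ket$. With $R_1$, $R_2$, $\wt R$ and $N$ held fixed, only the off-diagonal blocks of $R_0$ move, so $\delta A=2\Theta\,\delta R_0$ with $\delta R_0=\left[\begin{smallmatrix}0&\delta R_{12}\\ \delta R_{12}^\rT&0\end{smallmatrix}\right]$. Carrying $\Theta$ across the Frobenius pairing (using $\Theta^\rT=-\Theta$) and projecting onto the symmetric matrices exactly as in (\ref{delta1})--(\ref{stat}) turns the variation into a multiple of $\bra\bS(\Theta\Sigma(BB^\rT+2AP)),\delta R_0\ket$. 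Because $\delta R_0$ has vanishing diagonal blocks and its off-diagonal blocks are a free $\delta R_{12}$ and its transpose, the pairing collapses onto the $(1,2)$-block, so stationarity is equivalent to the single block equation $(\bS(\Theta\Sigma(BB^\rT+2AP)))_{12}=0$.

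The remaining step is to split off the $R_{12}$-dependence. Using the block-diagonal $\Theta$ from (\ref{TTT}) I would write $A=\breve A+(A-\breve A)$, where $\breve A$ is the value of $A$ at $R_{12}=0$ defined in (\ref{K12}) and $A-\breve A=2\left[\begin{smallmatrix}0&\Theta_1R_{12}\\ \Theta_2R_{12}^\rT&0\end{smallmatrix}\right]$. Substituting and using the definition of $Q$ in (\ref{K12}) gives $(\bS(\Theta\Sigma(BB^\rT+2AP)))_{12}=2Q+2(\bS(\Theta\Sigma(A-\breve A)P))_{12}$, so the optimality condition becomes $Q+(\bS(\Theta\Sigma(A-\breve A)P))_{12}=0$. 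It then remains to expand the $(1,2)$-block of $\bS(\Theta\Sigma(A-\breve A)P)$ by partitioning $\Sigma$ and $P$ conformally; after using $\Sigma^\rT=\Sigma$, $P^\rT=P$ and $\Theta_k^\rT=-\Theta_k$ to simplify the transposed $(2,1)$-block produced by the symmetriser, this block equals the four $R_{12}$-dependent terms of (\ref{RASE12}), and the claimed condition follows.

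The main obstacle I anticipate is purely the block-matrix bookkeeping in this last expansion: tracking which blocks of $\Sigma$, $P$ and $\Theta$ multiply $R_{12}$ versus $R_{12}^\rT$, and the sign flips introduced when $\Theta_1$, $\Theta_2$ are transposed by $\bS$. Everything else is a direct transcription of the convexity/affineness argument and the first-variation computation already carried out for Theorem~\ref{th:R}.
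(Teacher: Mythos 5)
Your proposal is correct and follows essentially the same route as the paper's proof: reduce (\ref{tauhatmaxR12}) to the convex problem $\ddot{\Delta}\to\inf$ over $R_{12}$, restrict the variation (\ref{delta1}) to $\delta R = \left[\begin{smallmatrix}0 & \delta R_{12}\\ \delta R_{12}^\rT & 0\end{smallmatrix}\right]$ so that stationarity collapses to the $(1,2)$-block condition $(\bS(\Theta\Sigma(BB^\rT+2AP)))_{12}=0$, and then split $A=\breve{A}+2\Theta\left[\begin{smallmatrix}0 & R_{12}\\ R_{12}^\rT & 0\end{smallmatrix}\right]$ and expand the block to obtain (\ref{RASE12}). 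The block bookkeeping you flag does work out: the symmetrizer contributes $(M_{21})^\rT$, whose sign flips from $\Theta_k^\rT=-\Theta_k$ cancel in pairs, yielding exactly the four $R_{12}$-dependent terms.
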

\begin{proof}
The proof follows the lines of the proof of Theorem~\ref{th:R} except that the optimality condition (\ref{opt}) is replaced with
\begin{equation}
\label{opt12}
    \d_{R_{12}} \ddot{\Delta} = 0.
\end{equation}
Accordingly,  the energy matrix $R$ in (\ref{Rclos_Nclos}) is varied over an appropriate subspace of $\mS_n$ as $\delta R  = \delta R_0 = {\small\begin{bmatrix}
  0 & \delta R_{12}\\
  \delta R_{12}^\rT & 0
\end{bmatrix}}$ because the matrices $R_1$, $R_2$, $\wt{R}$ are fixed,    and hence, (\ref{delta1}) reduces to
$    \delta \ddot{\Delta}
       =
    -4\Big\bra
        \bS(\Theta \Sigma(BB^\rT + 2AP)),
        {\small\begin{bmatrix}
          0 & \delta R_{12}\\
        \delta R_{12}^\rT & 0
        \end{bmatrix}}
    \Big\ket
    =
    -8\bra
        (\bS(\Theta \Sigma(BB^\rT + 2AP)))_{12},
        \delta R_{12}
    \ket
$,
whereby $\d_{R_{12}} \ddot{\Delta} = -8(\bS(\Theta \Sigma(BB^\rT + 2AP)))_{12}$. Therefore, (\ref{opt12})  is  equivalent  to  an  appropriate   modification  of  (\ref{stat}):
$
    0  =
    \frac{1}{2}(\bS(\Theta \Sigma(BB^\rT + 2AP)))_{12}
    =
    \frac{1}{2}(\bS(\Theta \Sigma(BB^\rT + 2\breve{A}P)))_{12}
    +
    2\Big(\bS\Big(\Theta \Sigma \Theta         {\small\begin{bmatrix}
          0 & R_{12}\\
        R_{12}^\rT & 0
        \end{bmatrix}} P\Big)\Big)_{12}
     =
    Q +
    \Theta_1 \Sigma_{11}\Theta_1 R_{12} P_{22}  +
    \Theta_1 \Sigma_{12}\Theta_2 R_{12}^\rT  P_{12}
    +
    P_{12}R_{12}^\rT \Theta_1 \Sigma_{12}\Theta_2
    +
    P_{11}R_{12} \Theta_2 \Sigma_{22}\Theta_2
$,
which uses (\ref{K12}) and the representation $A= 2\Theta{\small\begin{bmatrix}
          0 & R_{12}\\
        R_{12}^\rT & 0
        \end{bmatrix}} + \breve{A}$ for the matrix $A$ in (\ref{cAB}) along with the block diagonal structure (\ref{TTT}) of the CCR matrix $\Theta$ and establishes (\ref{RASE12}) as a necessary and sufficient condition of optimality for the problem (\ref{tauhatmaxR12}).
\end{proof}

The relation (\ref{RASE12}) reduces to an algebraic Sylvester equation
$\Theta_1 \Sigma_{11}\Theta_1 R_{12} P_{22}   +
    P_{11}R_{12} \Theta_2 \Sigma_{22}\Theta_2 + Q=0$ for the matrix $R_{12}$
if the weighting matrix $\Sigma$ or the initial second-moment matrix $P$ for the system variables of the composite OQHO is block diagonal (that is, $\Sigma_{12}=\Sigma_{21}^\rT = 0$ or $P_{12} = P_{21}^\rT = 0$). Note that Theorem~\ref{th:R12} demonstrates one of possible approaches to optimizing quantum system interconnections for quantum memory applications.

\end{document}